\renewcommand{\le}{\leqslant}
\renewcommand{\ge}{\geqslant}
\newcommand{\Ll}{\left}
\newcommand{\Rr}{\right}
\newcommand{\mcl}{\mathcal}
\newcommand{\td}{\tilde}
\renewcommand{\bar}{\overline}
\renewcommand{\d}{\mathrm{d}}
\newcommand{\N}{\mathbb{N}}
\newcommand{\Z}{\mathbb{Z}}
\newcommand{\R}{\mathbb{R}}
\newcommand{\Rd}{{\mathbb{R}^d}}
\newcommand{\E}{\mathbb{E}}
\newcommand{\al}{\alpha}
\newcommand{\ep}{\varepsilon}
\newcommand{\dr}{\partial}
\DeclareMathOperator{\Ber}{\operatorname{Ber}}
\newtheorem{theorem}{Theorem}[section]
\newtheorem{lemma}[theorem]{Lemma}
\newtheorem{proposition}[theorem]{Proposition}
\theoremstyle{remark}
\numberwithin{equation}{section}
\newcounter{Hequation}
\g@addto@macro\equation{\stepcounter{Hequation}}
\begin{document}

\title{\vspace{-1cm}Breakdown of a concavity property of mutual information for non-Gaussian channels
}
\author{Anastasia Kireeva\thanks{\textsc{\tiny Department of Mathematics, ETH Zurich,  anastasia.kireeva@math.ethz.ch }} \and  Jean-Christophe Mourrat\thanks{\textsc{\tiny Department of Mathematics, ENS Lyon and CNRS, jean-christophe.mourrat@ens-lyon.fr}}}

\date{}
\maketitle
\vspace*{-0.7cm}
\begin{abstract}
Let $S$ and $\td S$ be two independent and identically distributed random variables, which we interpret as the signal, and let $P_1$ and $P_2$ be two communication channels. We can choose between two measurement scenarios: either we observe $S$ through $P_1$ and $P_2$, and also $\td S$ through $P_1$ and $P_2$; or we observe $S$ twice through $P_1$, and $\td{S}$ twice through $P_2$. In which of these two scenarios do we obtain the most information on the signal $(S, \td S)$? While the first scenario always yields more information when $P_1$ and~$P_2$ are additive Gaussian channels, we give examples showing that this property does not extend to arbitrary channels. 
As a consequence of this result, we show that the continuous-time mutual information arising in the setting of community detection on sparse stochastic block models is not concave, even in the limit of large system size. This stands in contrast to the case of models with diverging average degree, and brings additional challenges to the analysis of the asymptotic behavior of this quantity. 
\end{abstract}

%
%
%
%

\section{Introduction}
Let $P_S$ be a probability measure with finite support $\mcl S$, and let $S$ be a random variable sampled according to $P_S$, which we think of as a signal. A \emph{communication channel} over~$\mcl S$, or more simply a \emph{channel}, is a family of probability measures $(P(\cdot \mid s))_{s \in \mcl S}$ over $\R^d$ for some integer $d \ge 1$, which we view as a conditional probability distribution over $\Rd$ given~$S$. Let $f : \mcl S \to \Rd$, and let $W$ be a standard $d$-dimensional Gaussian random vector independent of $S$. The conditional law, given $S$, of the random variable
\begin{equation}
\label{e.def.Gaussian.channel}
X := f(S) + W
\end{equation}
defines a channel. We call any channel that can be constructed in this way  a \emph{Gaussian channel}. The information-theoretic quantities studied in this paper are invariant under bijective bimeasurable transformations of the channel output; in particular, there is no loss of generality in assuming the covariance matrix of the noise term in \eqref{e.def.Gaussian.channel} to be the identity. For random variables $X$ and $Y$ defined on the same probability space, we denote by $I(X;Y)$ their mutual information, that is, 
\begin{equation*}  
I(X;Y) := \E \Ll[ \log \Ll(\frac{\d P_{(X,Y)}}{\d P_X \otimes \d P_Y}(X,Y) \Rr) \Rr] ,
\end{equation*}
where $P_{(X,Y)}$, $P_X$ and $P_Y$ are the laws of $(X,Y)$, $X$ and $Y$ respectively.

Let $P_1$ and $P_2$ be two channels over $\mcl S$. Conditionally on $S$, we sample $X_1$, $X'_1$, $X_2$ and $X'_2$ independently, with $X_1, X'_1$ sampled according to $P_1(\cdot \mid S)$, and $X_2$, $X'_2$ sampled according to $P_2(\cdot \mid S)$. We consider the following question.
\begin{equation}
\tag{Q1}
\label{e.question}
\text{Do we have } \ I(S;(X_1,X'_1)) + I(S;(X_2,X'_2)) \le 2 I(S;(X_1,X_2)) \ ?
\end{equation}

\begin{figure}[tbp]
\centering
\begin{subfigure}{0.45\textwidth}
    \centering
    \tikzset{every node/.style={font=\large},
    every path/.style={line width=0.25mm},
    }
    \begin{tikzpicture}
    \def\height{2.5}
    \node (S) at (0,\height) {$S$};
    \node (X) at (-1,0) {$X_1$};
    \node (Y) at (1,0) {$X_2$};
    \draw [->] (S) -- (X) node[midway,above left,scale=0.7] {$P_1$};
    \draw [->] (S) -- (Y) node[midway,above right,scale=0.7] {$P_2$};
    
    \node (S_tilde) at (3.5,\height) {$\td S$};
    \node (X_tilde) at (2.5,0) {$\tilde X_1$};
    \node (Y_tilde) at (4.5,0) {$\tilde X_2$};
    \draw [->] (S_tilde) -- (X_tilde) node[midway,above left,scale=0.7] {$P_1$};
    \draw [->] (S_tilde) -- (Y_tilde) node[midway,above right,scale=0.7] {$P_2$};
    \end{tikzpicture}
\caption{Scenario 1: We observe the signal and its independent copy twice through both channels $P_1$ and $P_2$.}
\end{subfigure}
\hspace{5mm}
\begin{subfigure}{0.45\textwidth}
    \centering
    \tikzset{every node/.style={font=\large},
    every path/.style={line width=0.25mm},
    }
    \begin{tikzpicture}
    \def\height{2.5}
    \node (S) at (0,\height) {$S$};
    \node (X) at (-1,0) {$X_1$};
    \node (Y) at (1,0) {$X_1^\prime$};
    \draw [->] (S) -- (X) node[midway,above left,scale=0.7] {$P_1$};
    \draw [->] (S) -- (Y) node[midway,above right,scale=0.7] {$P_1$};
    
    \node(S_tilde) at (3.5,\height) {$\td S$};
    \node (X_tilde) at (2.5,0) {$\td X_2$};
    \node (Y_tilde) at (4.5,0) {$\td X_2^\prime$};
    \draw [->] (S_tilde) -- (X_tilde) node[midway,above left,scale=0.7] {$P_2$};
    \draw [->] (S_tilde) -- (Y_tilde) node[midway,above right,scale=0.7] {$P_2$};
    \end{tikzpicture}
\caption{Scenario 2: We observe the signal twice through channel $P_1$ and its independent copy twice through channel $P_2$.}
\end{subfigure}
\caption{Does the scenario on the left side give us more information about $(S,\td S)$ than the scenario on the right side?}
\label{fig:sampling_procedure}
\end{figure}
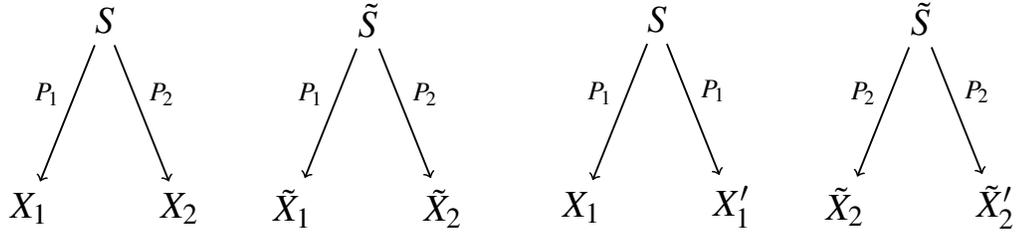

A possibly more intuitive way to ask this question, following the phrasing in the abstract, is displayed in Figure~\ref{fig:sampling_procedure}, where we denote by $(\td S, \td X_1, \td X'_1, \td X_2, \td X'_2)$ an independent copy of $(S, X_1, X'_1, X_2, X'_2)$.
As will be seen below, the answer to this question is positive whenever $P_1$ and $P_2$ are Gaussian channels. However, we will show that the answer to this question is actually negative if $P_1$ and $P_2$ can be arbitrary channels. In fact, our counterexamples are even such that 
\begin{equation*}  
\min \Ll( I(S;(X_1,X'_1)) , I(S;(X_2,X'_2)) \Rr) > I(S;(X_1,X_2)).
\end{equation*}
While we find this question interesting on its own, we are also motivated by its implications in the context of community detection problems. We consider the setting of the stochastic block model \cite{Finberg, Holland, Yuchung, Harrison}, sometimes also called the planted partition model \cite{Boppana, Bui, Dyer} or the inhomogeneous random graph model \cite{Bollobas}. In the case of two communities, this model is defined as follows. First, we independently attribute each individual to one of the two possible communities. Next, independently for each pair of individuals, we draw an edge between these two individuals with probability $d_{\mathrm{in}}/N$ if the two individuals belong to the same community, or with probability $d_{\mathrm{out}}/N$ if the two individuals belong to different communities, where $N$ is the total number of individuals. We are then shown the resulting graph, but not the underlying community structure, which we aim to reconstruct. The choice of scaling for the link probabilities ensures that the average degree of a node remains bounded as $N$ tends to infinity. 

This problem has received considerable attention. An early contribution is the very inspiring work of \cite{DKMZ}, which relies on deep non-rigorous statistical physics arguments. 
In the case when the individuals are equally likely to belong to one or the other community, it was shown in \cite{Massoulie, Mossel_reconstruction, Mossel_detection} that one can recover meaningful information on the underlying community structure if and only if $(d_{\mathrm{in}} - d_{\mathrm{out}})^2 > 2 (d_{\mathrm{in}} + d_{\mathrm{out}})$; and in this case, there exists an efficient algorithm for doing so. 

A more refined question consists in studying the asymptotic behavior of the mutual information between the observed graph and the community structure, in the limit of large~$N$. When $d_{\mathrm{in}} < d_{\mathrm{out}}$, this problem was resolved in \cite{Abbe_disassortative, CKPZ}. The case when ${d_{\mathrm{in}} > d_{\mathrm{out}}}$ is more challenging and was only resolved very recently in~\cite{yu2022ising}; we also refer to \cite{Abbe_assortative, kanade2016global, mossel2016belief, mossel2016local} for earlier work on this. The core of the argument of \cite{yu2022ising} is to show that there is a unique fixed point to a certain belief-propagation (BP) distributional recursion. 

For situations with four or more communities, the problem becomes more complicated, and there exist choices of parameters for which this BP fixed-point equation admits more than one solution \cite{gu2023uniqueness}. In these cases, a strategy in the spirit of that deployed in \cite{yu2022ising} therefore cannot be adapted in a straightforward way, and further work is necessary.

An alternative approach to the problem of identifying the asymptotic behavior of the mutual information between the observed graph and the community structure has been initiated in \cite{sparse_PDE, sparse_prob}. The gist of the approach is to identify the limit mutual information as the solution to a certain partial differential equation (PDE). This technique allowed for the asymptotic analysis of the mutual information of a very large class of models involving Gaussian channels \cite{JC_HB_FinRank}; see also \cite{chen2022hamilton, chen2021limiting, chen2022hamilton2,  JC_matrix, JC_HJ}. Using other approaches, a number of special cases had been solved earlier in \cite{barbier2016, barbier2019adaptive, barbier2017layered, kadmon2018statistical, Lelarge, lesieur2017statistical,  luneau2021mutual, luneau2020high, mayya2019mutual, miolane2017fundamental,  reeves2020information, reeves2019geometry}. As shown in \cite{barbier2019dense, Deshpande, Lelarge}, a Gaussian equivalence property ensures that these results also allow us to identify the asymptotic behavior of the mutual information of the community detection problem in regimes in which the average degree of a node diverges with the system size. 

In the approach taken up in \cite{JC_HB_FinRank, sparse_prob},  one can leverage a certain regularity property of the mutual information to obtain a lower bound on the limit mutual information in terms of the solution to the PDE. This is similar to the results obtained in \cite{JC_NC, JC_upper} in the context of spin glasses. In order to show the matching upper bound, a central ingredient of the approach taken up in \cite{JC_HB_FinRank} is the observation that the mutual information is a \emph{concave} function of the signal-to-noise ratios of the various observations considered for the resolution of the problem. For the community detection problem, if the mutual information studied in \cite{sparse_prob} happened to be concave in its parameters, we would be optimistic that the approach of \cite{JC_HB_FinRank} would be adaptable to this setting, and thus would allow us to obtain the matching upper bound. However, we show here that the mutual information is in fact \emph{not} a concave function of its parameters. We find this surprising given that this concavity property does hold for the problems with Gaussian channels considered in \cite{JC_HB_FinRank} and elsewhere. We derive this breakdown of concavity as a consequence of the fact that the answer to Question~\ref{e.question} is negative in general. Precisely, we will show that, although the Hessian of the mutual information only contains nonpositive entries, we can witness a breakdown of concavity that scales as $(d_{\mathrm{in}} - d_{\mathrm{out}})^6$ in the regime of small $|d_{\mathrm{in}} - d_{\mathrm{out}}|$. We are also surprised by the relatively high exponent $6$ appearing here, suggesting a rather subtle deviation from concavity in the regime of small $|d_{\mathrm{in}} - d_{\mathrm{out}}|$. 

Had the mutual information been concave in its parameters, we would presumably have been able to represent the solution to the relevant PDE as a saddle-point variational problem, using a version of the Hopf formula  (see \cite{JC_HB_FinRank}, and also \cite{sparse_PDE} for a proof of a related variational formula under different assumptions). Given that this concavity property is in fact invalid, we tend to think that there will not be any reasonable way to represent the limit mutual information of community detection as a variational problem, unlike the situation with Gaussian channels. In the context of spin glasses, this point is discussed more precisely in \cite[Section~6]{JC_NC}. 

The rest of the paper is organized as follows. In Section~\ref{s.results}, we show that the answer to Question~\ref{e.question} is positive for Gaussian channels, and construct counterexamples in general. We pay special attention to the case of Bernoulli channels with very low signal-to-noise ratios, as these examples will be fundamental to subsequent considerations concerning the community detection problem. In Section~\ref{s.multidimensional}, we focus on Gaussian channels and explore variants of the inequality appearing in Question~\ref{e.question} that involve more than two channels. In Section~\ref{s.community}, we turn to the setting of community detection, for the stochastic block model with two communities. We use the results of Section~\ref{s.results} to show that the mutual information is not a concave function of its parameters, even after we pass to the limit of large system size. 

%
%
%
%
\section{Answers to Question~\ref{e.question}}
\label{s.results}

We start by providing a positive answer to Question~\ref{e.question} in the case of Gaussian channels.
\begin{proposition}[Mixing Gaussian channels yields more information]
\label{p.gaussians}
If $P_1$ and~$P_2$ are Gaussian channels, then the answer to Question \ref{e.question} is positive. 
\end{proposition}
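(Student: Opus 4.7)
The plan is to rephrase the desired inequality as a concavity statement for the mutual information of a vector Gaussian channel in its signal-to-noise ratios, and then to invoke the classical concavity of mutual information in the signal-to-noise matrix of a Gaussian channel.

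First I would reduce to a two-parameter function. Writing $X_i = f_i(S) + W_i$ and $X_i' = f_i(S) + W_i'$ with $W_1, W_1', W_2, W_2'$ independent standard $d$-dimensional Gaussian vectors, one has $(X_i + X_i')/\sqrt{2} = \sqrt{2}\, f_i(S) + (W_i + W_i')/\sqrt{2}$, while $(X_i - X_i')/\sqrt{2}$ is a standard Gaussian independent of $S$ and of the sum. Hence $(X_i, X_i')$ is information-equivalent to the single observation $\sqrt{2}\, f_i(S) + \td W_i$ with $\td W_i$ standard Gaussian. Setting
\[\phi(t_1, t_2) := I\bigl(S;\, (\sqrt{t_1}\, f_1(S) + W_1,\, \sqrt{t_2}\, f_2(S) + W_2)\bigr), \qquad (t_1, t_2) \in [0, \infty)^2,\]
we obtain $I(S; (X_1, X_1')) = \phi(2, 0)$, $I(S; (X_2, X_2')) = \phi(0, 2)$, and $I(S; (X_1, X_2)) = \phi(1, 1)$, so the inequality in Question~\ref{e.question} becomes
\[\phi(2, 0) + \phi(0, 2) \le 2\, \phi(1, 1).\]
Since $(1,1)$ is the midpoint of the segment from $(2,0)$ to $(0,2)$, this would follow from concavity of $\phi$ on $[0, \infty)^2$.

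Next I would establish this concavity. Setting $Y_i := \sqrt{t_i}\, f_i(S) + W_i$, the I-MMSE identity applied component-wise yields
\[\frac{\partial \phi}{\partial t_i}(t_1, t_2) = \tfrac{1}{2}\, \E\bigl[\|f_i(S) - \E[f_i(S) \mid Y_1, Y_2]\|^2\bigr],\]
so the Hessian of $\phi$ is half the Jacobian of the joint MMSE map $(t_1, t_2) \mapsto (\mathrm{mmse}_1, \mathrm{mmse}_2)$. Monotonicity of MMSE under additional Gaussian observations gives non-positivity of the individual Hessian entries; to upgrade this to negative semi-definiteness, I would invoke the classical fact that, for any vector Gaussian channel $Y = Hx + W$ with $W$ standard Gaussian, the mutual information $I(X; Y)$ is a concave function of the signal-to-noise matrix $H^\top H$ on the positive semi-definite cone. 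In our setting $H = \diag(\sqrt{t_1}\, I, \sqrt{t_2}\, I)$, so $H^\top H = \diag(t_1\, I, t_2\, I)$, and restricting this matrix-concave function to the two-parameter affine family gives concavity of $\phi$ in $(t_1, t_2)$.

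The hard part is the matrix-level concavity invoked above. A self-contained route computes the Hessian of $\phi$ explicitly and identifies its negative with a positive semi-definite quadratic form in the conditional law of $S$ given $(Y_1, Y_2)$; equivalently, noting that $\phi(t_1, t_2)$ equals the differential entropy of $(Y_1, Y_2)$ up to a constant (the conditional differential entropy of $(Y_1, Y_2)$ given $S$ being independent of $(t_1, t_2)$ for unit-variance Gaussian noise), its Hessian in the SNR parameters equals minus a Fisher information matrix and is therefore negative semi-definite. Once concavity of $\phi$ is in hand, evaluating at the midpoint of $(2, 0)$ and $(0, 2)$ immediately yields the proposition.
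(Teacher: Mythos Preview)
Your proposal is correct and follows essentially the same approach as the paper: both reduce the inequality to the midpoint concavity of the two-parameter function $\phi(t_1,t_2)=I\bigl(S;(\sqrt{t_1}\,f_1(S)+W_1,\sqrt{t_2}\,f_2(S)+W_2)\bigr)$, both use the sum/difference trick $(X_i+X_i')/\sqrt{2}$, $X_i-X_i'$ to identify $I(S;(X_i,X_i'))$ with $\phi(2,0)$ or $\phi(0,2)$, and both invoke the known concavity of mutual information in the SNR matrix of a vector Gaussian channel (the paper cites \cite{lamarca2009linear,payaro2009hessian} explicitly where you appeal to it as a classical fact). The only cosmetic slip is that $W_1$ and $W_2$ may have different dimensions $d_1,d_2$, not a common $d$; this does not affect the argument.
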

\begin{proof}
The proof of Proposition~\ref{p.gaussians} is based on remarkable identities involving derivatives of the mutual information with respect to the signal-to-noise ratio. In particular, the first-order derivative of the mutual information is half of the minimal mean-square error, as was explained in \cite{guo2005mutual} and extended to the matrix case in~\cite{lamarca2009linear, payaro2009hessian, reeves2018mutual}. Here we will rely on the calculation of second-order derivatives of the mutual information, which already appeared in \cite{guo2011estimation, lamarca2009linear, payaro2009hessian}.

By definition of Gaussian channels, for each $i \in \{1,2\}$, there exists a mapping $f_i : \mcl S \to \R^{d_i}$ such that the channel $P_i$ can be represented as 
\begin{equation*}  
S \mapsto f_i(S) + W_i,
\end{equation*}
where $W_1$, $W_2$ are independent standard Gaussians, independent of $S$, of dimension~$d_1$ and $d_2$ respectively. For every $i \in \{1,2\}$ and $t_i \ge 0$, we define
\begin{equation*}  
X_i(t_i) := \sqrt{t_i} f_i(S) + W_i,
\end{equation*}
as well as
\begin{equation*}  
\mcl I (t_1,t_2) := I(S; (X_1(t_1), X_2(t_2))).
\end{equation*}
Since the mapping $s \mapsto (s,f_1(s),f_2(s))$ is injective, we have
\begin{equation*}  
\mcl I(t_1,t_2) = I((S,f_1(S),f_2(S)); (X_1(t_1), X_2(t_2))).
\end{equation*}
We can therefore replace the signal $S$ by $(S,f_1(S), f_2(S))$ if desired, and apply \cite[Theorem~3]{lamarca2009linear} or \cite[Theorem~5]{payaro2009hessian} with $H$ chosen to be the identity matrix and $P$ chosen to be a diagonal matrix with $d_1$ entries at $\sqrt{t_1}$ and $d_2$ entries at $\sqrt{t_2}$. The conclusion of these theorems is that the function $\mcl I$ is jointly concave in $(t_1,t_2)$. In particular,
\begin{equation*}  
\mcl I(2,0) + \mcl I(0,2) \le 2 \mcl I(1,1).
\end{equation*}
Recalling that 
\begin{equation*}  
\mcl I(1,1) = I(S;(X_1(1),X_2(1))) = I(S;(X_1,X_2)),
\end{equation*}
Proposition~\ref{p.gaussians} will be proved once we verify that 
\begin{equation}
\label{e.to.verify}
\mcl I(2,0) = I(S;(X_1,X'_1)) \quad \text{ and } \quad \mcl I(0,2) = I(S;(X_2,X'_2)).
\end{equation}
We fix $i \in \{1, 2\}$, let $W_i'$ be a $d_i$-dimensional standard Gaussian independent of $(S,W_i)$, and use it to represent $X_i'$ as
\begin{equation*}  
X'_i = f_i(S) + W_i'.
\end{equation*}
We define
\begin{equation*}  
Z_i := \frac {X_i + X'_i}{\sqrt{2}} = \sqrt{2t} f_i(S) + \frac{W_i + W'_i}{\sqrt{2}}, 
\end{equation*}
and
\begin{equation*}  
D_i := X_i - X'_i = W_i - W_i'.
\end{equation*}
Using that the the map $(x,y) \mapsto ((x+y)/\sqrt{2}, x-y)$ is bijective and the chain rule, we can write
\begin{equation*}  
I(S;(X_i,X'_i)) = I(S;(Z_i,D_i)) = I(S;D_i) + I(S;Z_i \mid D_i).
\end{equation*}
The random variables $S$ and $D_i$ being independent, the first term on the right side of this identity vanishes. We also observe that the pair $(W_i,W'_i)$ is independent of~$S$, and moreover, the Gaussian random variables $W_i+W_i'$ and $W_i-W_i'$ are independent. This implies that the random variables $(S,W_i+W_i',W_i - W_i')$ are independent, and thus that $D_i$ is independent of the pair $(S,Z_i)$. The previous display therefore simplifies into
\begin{equation*}  
I(S;(X_i,X'_i)) = I(S;Z_i).
\end{equation*}
Since the pairs $(S,X_i(2))$ and $(S,Z_i)$ have the same law, this is \eqref{e.to.verify}. 
\end{proof}
We now turn to showing that Proposition~\ref{p.gaussians} does not generalize to non-Gaussian channels. Before doing so, we record a simple observation allowing to simplify the question somewhat.
\begin{lemma}
\label{l.i.identity}
Let $S$ be a random variable with finite support $\mcl S$, let $P_1, P_2$ be two communication channels over $\mcl S$, and conditionally on $S$, let $(X_1,X_1', X_2, X_2')$ be independent random variables, with $X_1, X_1'$ sampled according to $P_1(\cdot \mid S)$ and $X_2, X'_2$ sampled according to $P_2(\cdot \mid S)$. For every $i,j \in \{1,2\}$, we have
\begin{equation}  
\label{e.i.identity}
I(S;(X_i;X'_j)) = I(S;X_i) + I(S;X_j) - I(X_i;X'_j).
\end{equation}
\end{lemma}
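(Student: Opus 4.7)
The plan is to derive \eqref{e.i.identity} from two applications of the chain rule for mutual information, combined with the conditional independence of $X_i$ and $X_j'$ given $S$.

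First, I would apply the chain rule to the mutual information between $S$ and the pair $(X_i,X_j')$:
\begin{equation*}
I(S;(X_i,X_j')) = I(S;X_i) + I(S;X_j' \mid X_i).
\end{equation*}
This reduces the problem to computing the conditional mutual information $I(S;X_j' \mid X_i)$. To extract it, I would apply the chain rule in two different ways to the joint mutual information $I((S,X_i);X_j')$:
\begin{equation*}
I((S,X_i);X_j') = I(X_i;X_j') + I(S;X_j' \mid X_i) = I(S;X_j') + I(X_i;X_j' \mid S).
\end{equation*}

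Next, I would invoke the two structural facts: (i) conditionally on $S$, the variables $X_i$ and $X_j'$ are independent by construction, so $I(X_i;X_j' \mid S) = 0$; and (ii) the pair $(S,X_j')$ has the same joint law as $(S,X_j)$, hence $I(S;X_j') = I(S;X_j)$. Combining these with the previous display yields
\begin{equation*}
I(S;X_j' \mid X_i) = I(S;X_j) - I(X_i;X_j').
\end{equation*}
Substituting this into the first chain rule identity gives exactly \eqref{e.i.identity}.

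There is no real obstacle here; the statement is essentially a rewriting of the chain rule under the conditional independence hypothesis. The only minor point to be careful about is ensuring that the law identification $I(S;X_j') = I(S;X_j)$ is legitimate, which follows because $X_j'$ is sampled from the same channel $P_j(\cdot \mid S)$ as $X_j$ and thus $(S,X_j')$ and $(S,X_j)$ have identical joint distributions.
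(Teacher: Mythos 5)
Your proof is correct and follows essentially the same route as the paper: the chain rule applied twice (once to $I(S;(X_i,X_j'))$ and once, in two ways, to $I((S,X_i);X_j')$), then conditional independence of $X_i$ and $X_j'$ given $S$, then the law identification $(S,X_j') \overset{d}{=} (S,X_j)$. You are, if anything, slightly more careful than the paper, which writes the final expression with $I(S;X_j')$ and leaves the identification with $I(S;X_j)$ implicit.
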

\begin{proof}
By the chain rule,
\begin{align*}  
I(S;(X_i,X'_j)) & = I(S;X_i) + I(S;X'_j \mid X_i) \\
& = I(S;X_i) + I(X'_j ; (S, X_i)) - I(X_i;X'_j) \\
& = I(S;X_i) + I(S;X'_j) + I(X_i; X'_j \mid S) - I(X_i;X'_j).
\end{align*}
Conditionally on $S$, the random variables $X_i$ and $X'_j$ are independent. It thus follows that ${I(X_i, X'_j \mid S)} = 0$, and we obtain \eqref{e.i.identity}. 
\end{proof}
A direct consequence of Lemma~\ref{l.i.identity} is that 
\begin{equation*}  
2 I(S;(X_1,X_2)) - I(S;(X_1,X'_1)) - I(S;(X_2,X'_2))  = I(X_1, X'_1) + I(X_2, X'_2) - 2 I(X_1, X_2),
\end{equation*}
And in particular, Question~\ref{e.question} can be rephrased as:
\begin{equation}
\tag{Q2}
\label{e.question.2}
\text{Do we have } \     2 I(X_1, X_2) \le I(X_1, X'_1) + I(X_2, X'_2) \ ?
\end{equation}
For every $p \in [0,1]$, we write $\Ber(p) := p\delta_1 + (1-p) \delta_0$ for the law of a Bernoulli random variable of parameter $p$. 
For our counterexamples, we assume that $S$ is a $\Ber(1/2)$ random variable, and we consider channels of the following form, for different choices of $p_0, p_1, q_0, q_1 \in [0,1]$:
\begin{equation*}  
P_1(\cdot \mid s) = \Ber(p_s)\quad 
\text{ and }
\quad
P_2(\cdot \mid s) = \Ber(q_s)\quad \quad (s \in \{0,1\}).
\end{equation*}
Already for the choice of $p_0 = 1/2$, $p_1 = 0$, $q_0 = 0$, $q_1 = 1/2$, we find that 
\begin{equation*}  
I(X_1;X_2) = \frac 5 2\log(2) - \frac 3 2 \log(3) \simeq 0.0849,
\end{equation*}
while
\begin{equation*}  
I(X_1;X'_1) = I(X_2;X'_2) = \log(2) + \frac 5 8 \log(5) - \frac 3 2 \log(3) \simeq 0.0511.
\end{equation*}

\begin{figure}[tbp]
    \centering
    \includegraphics[scale=0.38]{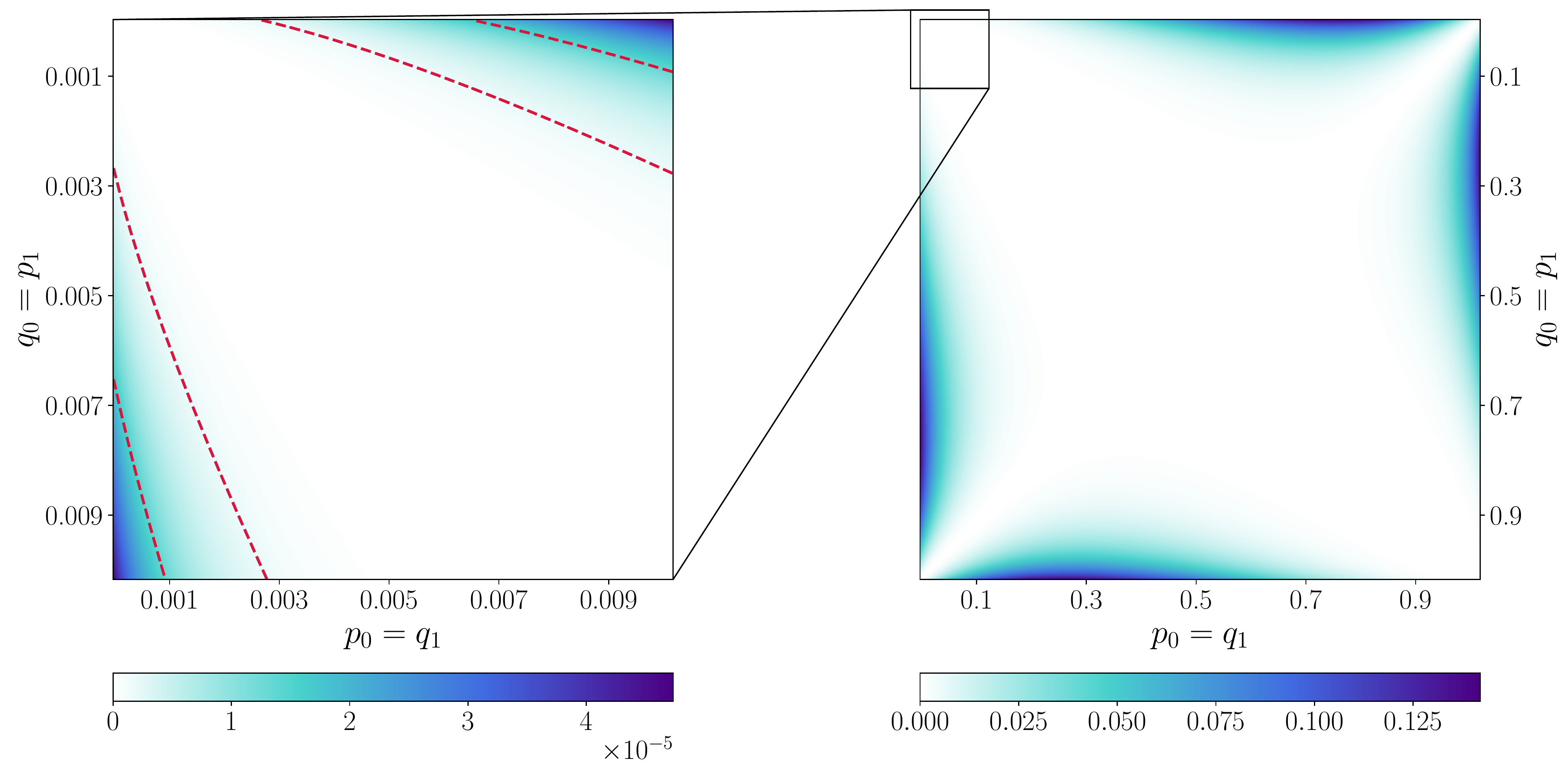}
    \caption{Value of the $2 I (X_1, X_2) - I(X_1, X_1^\prime) - I(X_2, X_2^\prime)$ for the setting of Bernoulli channels (see description in the text). The value is represented by color, the larger values correspond to darker color. Left: the regime of small $p_0, p_1$. Red dashed lines are countour lines of $\nicefrac{(p_0-p_1)^6}{(p_0 + p_1)^4}$. Right: general $p_0, p_1 \in [0,1]$.  }
    \label{fig:heatmap_bernoulli_symmetric}
\end{figure}
In particular, this leads to a counterexample to the inequalities appearing in Questions~\ref{e.question} and \ref{e.question.2}. In fact, for Bernoulli channels with $p_0 = q_1$ and $p_1=q_0$, we observe numerically that there are large regions of values of $p_0, p_1$ where the inequality in Question~\ref{e.question.2} does not hold---in fact, probably all values with $p_0 \neq p_1$, see \Cref{fig:heatmap_bernoulli_symmetric}. Perhaps surprisingly in view of Proposition~\ref{p.gaussians} and its proof, the next proposition shows that the inequalities in Questions~\ref{e.question} and \ref{e.question.2} can be violated even in regimes of small signal-to-noise ratio. This class of examples will be particularly relevant in the context of community detection discussed later.
\begin{proposition}
\label{p.taylor.expansion}
Assume that $S$ is a $\Ber(1 / 2)$ random variable, that we sample $X_1$, $X'_1$ according to $P_1(\cdot \mid S)$ and $X_2$, $X'_2$ according to $P_2(\cdot \mid S)$, where $P_1$ and $P_2$ are such that, for some $p_0, p_1, q_0, q_1 \ge 0$ and $\varepsilon > 0$,
\begin{equation*}  
P_1(\cdot \mid s) = \Ber(\ep p_s)\quad 
\text{ and }
\quad
P_2(\cdot \mid s) = \Ber(\ep q_s)\quad \quad (s \in \{0,1\}).
\end{equation*} 
If $p_0 = q_1$, $p_1 = q_0$, then
\begin{equation*}
2 I(X_1, X_2) - I(X_1, X'_1) - I(X_2, X'_2) \ge \frac{\ep^2(p_0 - p_1)^6}{6 (p_0 + p_1)^4} + o(\ep^2) \qquad (\ep \to 0).
\end{equation*}
In particular, the inequalities in Questions~\ref{e.question} and \ref{e.question.2} are false whenever $p_0 \neq p_1$ and $\ep > 0$ is sufficiently small.
\end{proposition}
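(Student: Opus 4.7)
The plan is to reduce the inequality to a purely entropic comparison for a one-parameter family of joint Bernoulli distributions on $\{0,1\}^2$, and then carry out a careful Taylor expansion in $\ep$.

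First, from the identity $I(X;Y) = H(X) + H(Y) - H(X,Y)$ (together with $H(X'_i) = H(X_i)$) one gets
\begin{equation*}
2I(X_1,X_2) - I(X_1,X'_1) - I(X_2,X'_2) = H(X_1,X'_1) + H(X_2,X'_2) - 2H(X_1,X_2).
\end{equation*}
The symmetry $p_0 = q_1$, $p_1 = q_0$ makes the joint law of $(X_2,X'_2)$ coincide with that of $(X_1,X'_1)$, so it suffices to lower-bound $2\bigl[H(X_1,X'_1) - H(X_1,X_2)\bigr]$. Now both joint laws on $\{0,1\}^2$ share the same marginal $\Ber(\ep\mu)$, with $\mu := (p_0+p_1)/2$, and are therefore fully specified by the single scalar parameter $\alpha := \ep^{-2}\,\P[(\cdot,\cdot)=(1,1)]$, equal to $\alpha_1 := (p_0^2+p_1^2)/2$ for $(X_1,X'_1)$ and to $\alpha_2 := p_0 p_1$ for $(X_1,X_2)$.

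Second, let $g(\alpha)$ denote the entropy of this common two-variable distribution. I would expand each of the four $-p\log p$ contributions to $g(\alpha)$ up to $O(\ep^2)$. The $\log\ep$ pieces coming from the $\ep^2$-atom and from the two $\ep$-atoms cancel exactly, and after simplification one obtains
\begin{equation*}
g(\alpha) = C(\ep) - \ep^2 \bigl[\alpha\log(\alpha/\mu^2) - \alpha\bigr] + O(\ep^3),
\end{equation*}
where $C(\ep)$ does not depend on $\alpha$ (it is a function of $\ep$ and the marginals only). Setting $h(\alpha) := \alpha\log(\alpha/\mu^2) - \alpha$, this yields $g(\alpha_1) - g(\alpha_2) = -\ep^2\bigl[h(\alpha_1) - h(\alpha_2)\bigr] + O(\ep^3)$.

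Third, the symmetric parametrization $\alpha_{1,2} = \mu^2(1\pm\beta)$ with $\beta := (p_0-p_1)^2/(p_0+p_1)^2 \in [0,1)$ turns this into
\begin{equation*}
h(\alpha_1) - h(\alpha_2) = \mu^2 \bigl[(1+\beta)\log(1+\beta) - (1-\beta)\log(1-\beta) - 2\beta\bigr].
\end{equation*}
The Taylor expansion of the bracket at $\beta = 0$ is $-\sum_{m\ge 1} \beta^{2m+1}/(m(2m+1))$: the constant, linear and all even-degree parts vanish, and every remaining coefficient is strictly negative, so the bracket is bounded above by $-\beta^3/3$. Substituting $\mu^2 = (p_0+p_1)^2/4$ and $\beta^3 = (p_0-p_1)^6/(p_0+p_1)^6$, and multiplying by $-2\ep^2$, recovers the claimed lower bound. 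The falsity of the inequalities in Questions~\ref{e.question} and~\ref{e.question.2} then follows from Lemma~\ref{l.i.identity}.

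The main obstacle is the bookkeeping in the second step: the four joint probabilities sit at three different $\ep$-scales ($1$, $\ep$ and $\ep^2$), so one must expand each of the corresponding $-p\log p$ terms with just the right precision to isolate the $\alpha$-dependence at order $\ep^2$, while verifying that the leftover contributions either cancel (for the $\log\ep$ pieces) or can be absorbed in the constant $C(\ep)$ (for those coming from the marginals). The unexpectedly high exponent $6$ in the conclusion can then be traced to two consecutive cancellations: once when the linear-in-$\beta$ part of the bracket above cancels against $-2\beta$, and once because $\beta$ itself is already quadratic in $p_0 - p_1$.
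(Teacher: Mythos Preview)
Your proof is correct. The reduction via $I(X;Y)=H(X)+H(Y)-H(X,Y)$, together with the observation that all four marginals are $\Ber(\ep\mu)$ and that $(X_2,X_2')\stackrel{d}{=}(X_1,X_1')$ under the symmetry $q_s=p_{1-s}$, cleanly isolates the single parameter $\alpha=\ep^{-2}\P[(\cdot,\cdot)=(1,1)]$; the expansion $g(\alpha)=C(\ep)-\ep^2[\alpha\log(\alpha/\mu^2)-\alpha]+O(\ep^3)$ checks out (the $\alpha$-dependent $\log\ep$ pieces from the $\ep^2$-atom and the two $\ep$-atoms do cancel), and the final analysis of the bracket via $f'(\beta)=\log(1-\beta^2)$ gives exactly the bound $\ep^2(p_0-p_1)^6/\bigl(6(p_0+p_1)^4\bigr)$.

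The paper proceeds differently at the organizational level: it expands $I(X_1,X_2)$ directly for general $(p_0,p_1,q_0,q_1)$ as a sum of four outcome contributions, specializes afterwards, and only then forms the difference. Both routes converge on the same one-variable function: with $t=\beta=(p_0-p_1)^2/(p_0+p_1)^2$, the paper studies $g(t)=t+\tfrac{1-t}{2}\log(1-t)-\tfrac{1+t}{2}\log(1+t)$ and shows $g'(t)=-\tfrac12\log(1-t^2)\ge t^2/2$, which is exactly $-\tfrac12$ times your bracket and your derivative identity. Your route buys a tidier bookkeeping step (the marginal entropies drop out at once, and the one-parameter structure is visible from the start), while the paper's route has the minor advantage of yielding the general-$(p,q)$ leading term of $I(X_1,X_2)$ as a by-product. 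One small remark: the falsity of Question~\ref{e.question.2} is immediate from your inequality; Lemma~\ref{l.i.identity} is what transfers it to Question~\ref{e.question}.
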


\begin{proof}
For unconstrained values of $p_0, p_1, q_0, q_1 \ge 0$, the definition of mutual information  yields that
\begin{equation*}
    \begin{split}
        I(X_1, X_2) &= \log \left ( \frac{\frac{1}{2}(1 - \varepsilon p_0) (1 - \varepsilon q_0) + \frac{1}{2}(1 - \varepsilon p_1) (1 - \varepsilon q_1)}{\frac{1}{4} (2 - \varepsilon p_0 - \varepsilon p_1) (2 - \varepsilon q_0 -\varepsilon q_1)} \right)  \\
        &\qquad \cdot \frac{1}{2} \left( (1 - \varepsilon p_0) (1 - \varepsilon q_0) + (1 - \varepsilon p_1) (1 - \varepsilon q_1)\right)  \\
        &\quad 
        + \log \left(\frac{\frac{1}{2}\varepsilon p_0 (1 - \varepsilon q_0) + \frac{1}{2} \varepsilon p_1 (1 - \varepsilon q_1)}{\frac{1}{4} \varepsilon (p_0 +  p_1) (2 - \varepsilon q_0 -\varepsilon q_1)}  \right) \cdot \frac{1}{2}\varepsilon ( p_0 (1 - \varepsilon q_0) + p_1 (1 - \varepsilon q_1)) \\
        &\quad + \left(\frac{\frac{1}{2}\varepsilon q_0 (1 - \varepsilon p_0) + \frac{1}{2} \varepsilon q_1 (1 - \varepsilon p_1)}{\frac{1}{4}\varepsilon (2 - \varepsilon p_0 -\varepsilon p_1) ( q_0 +  q_1) }  \right) \cdot \frac{1}{2}\varepsilon (q_0 (1 - \varepsilon p_0) + q_1 (1 - \varepsilon p_1)) \\
        &\quad + \log \left(\frac{\frac{1}{2} \varepsilon^2 p_0 q_0 + \frac{1}{2} \varepsilon^2 p_1  q_1}{\frac{1}{4} \varepsilon^2 ( p_0 + p_1) ( q_0 +  q_1)}  \right) \cdot \frac{1}{2} \varepsilon^2 (p_0 q_0 + p_1  q_1) \\
        &=: I_1 + I_2 + I_3 + I_4.
    \end{split}
\end{equation*}
The Taylor expansions of these terms are given by
\begin{equation*}
    \begin{split}
        I_1 &= \frac{1}{4} \varepsilon^2 (p_0 - p_1) (q_0 - q_1)+ o(\varepsilon^2),\\
        I_2 = I_3 &= - \frac{1}{4} \varepsilon^2 (p_0 - p_1) (q_0 - q_1) + o(\varepsilon^2), \\
        I_4 &= \frac{1}{2} \varepsilon^2 (p_0 q_0+p_1 q_1) \log \left(\frac{2 \left(p_0 q_0+p_1 q_1\right)}{\left(p_0+p_1\right) \left(q_0+q_1\right)}\right)+o(\varepsilon^2) .
    \end{split}
\end{equation*}
Hence we obtain 
\begin{equation*}
\begin{split}
    I(X_1, X_2) = &- \frac{1}{4} \varepsilon^2 (p_0 - p_1) (q_0 - q_1) \\
        & + \frac{1}{2} \varepsilon^2 \log \left ( 1 + \frac{(p_0 - p_1)(q_0 - q_1)}{(p_0 + p_1)(q_0 + q_1)}\right) (p_0 q_0 + p_1 q_1) + {o}(\varepsilon^2).
\end{split}
\end{equation*}
In the case when $p_0 = q_0$ and $p_1 = q_1$, we get
\begin{equation*}
    \begin{split}
        I(X_1, X'_1) &= I(X_2, X'_2) \\
        &= -\frac{1}{4} \varepsilon^2 (p_0 - p_1)^2 + \frac{1}{2} \log \left( 1 + \frac{(p_0-p_1)^2 }{(p_0 + p_1)^2} \right)  (p_0^2 + p_1^2) + o(\varepsilon^2),
    \end{split}
\end{equation*}
while in the considered case when $p_0 = q_1$ and $p_1 = q_0$, 
\begin{equation*}
    I(X_1, X_2) = \frac{1}{4} \varepsilon^2 (p_0 - p_1)^2 + p_0 p_1 \log \left( 1 - \frac{(p_0-p_1)^2 }{(p_0 + p_1)^2} \right)   + o(\varepsilon^2).
\end{equation*}
Rescaling the mutual information by $\varepsilon^2$, we obtain that
\begin{equation}\label{e.mutual_info_lbound}
\begin{split}
    &\frac{1}{\varepsilon^2}\left ( 2 I(X_1, X_2) - I(X_1, X'_1) - I(X_2, X'_2) \right) \\
    &\quad = (p_0 - p_1)^2 + 2 p_0 p_1 \log \left ( 1 - \frac{(p_0 - p_1)^2}{(p_0 + p_1)^2} \right) \\
    &\qquad -  (p_0^2 + p_1^2) \log \left( 1 + \frac{(p_0-p_1)^2 }{(p_0 + p_1)^2} \right) + o(1) 
\end{split}
\end{equation}
Denoting $t := \nicefrac{(p_0 - p_1)^2}{(p_0 + p_1)^2} \in [0,1]$, we can rewrite the above identity as
\begin{equation*}
    \begin{split}
        &\frac{1}{\varepsilon^2 (p_0+p_1)^2 }\left ( 2 I(X_1, X_2) - I(X_1, X'_1) - I(X_2, X'_2) \right) + o(1) \\ &\quad = t + \frac{1 - t}{2} \log  ( 1 - t) - \frac{1 + t}{2} \log(1 + t) =: g(t).
    \end{split}
\end{equation*}
We observe that $g(0)=0$, or equivalently, this difference is zero when $p_0 = p_1$. We compute the derivative of $g$ with respect to $t$ and obtain that
\begin{equation*}
    g'(t) = - \frac{1}{2} (\log(1 - t) + \log(1 + t)) = -\frac{1}{2}\log(1 - t^2) \ge \frac {t^2}{2},
\end{equation*}
so $g(t) \ge \frac{t^3}{6}$ for every $t \in [0,1]$. Substituting back $t$ yields that
\begin{equation*}
    2 I(X_1, X_2) - I(X_1, X'_1) - I(X_2, X'_2) \ge  \frac{\varepsilon^2 (p_0 - p_1)^6 }{6 (p_0 + p_1)^4 } + o(\epsilon^2),
\end{equation*}
as desired.
\end{proof}

%
%
%
%

\section{Positive semidefinite kernels in the Gaussian case}
\label{s.multidimensional}

For arbitrary random variables $(Z_i)_{1 \le i \le n}$, one may wonder whether $(I(Z_i;Z_j))_{1 \le i,j \le n}$ is a positive semidefinite matrix; a negative answer to this question was provided in \cite{jakobsen2014mutual}. In our setting, consider multiple channels $(P_1,\ldots, P_n)$ over $\mcl S$, and conditionally on $S$, denote by $(X_i,X'_i)_{1 \le i \le n}$ conditionally independent random variables with $X_i$, $X'_i$ distributed according to $P_i(\cdot \mid S)$. One could ask:
\begin{equation}
\tag{Q3}
\label{e.question.matrix}
\text{Is the matrix } (I(X_i;X'_j))_{1 \le i, j \le n}  \text{ positive semidefinite?}
\end{equation}
We find that this is a natural question on its own; as will be seen below, it also arises naturally in the study of the continuous-time mutual information discussed below in relation with the problem of community detection.
If the answer to Question~\ref{e.question.matrix} were positive, then it would mean that the mapping $(P_i, P_j) \mapsto I(X_i; X'_j)$ defines a positive semidefinite kernel over the space of channels. Notice that Question~\ref{e.question.2} can be rephrased as
\begin{equation*}  
\text{Do we have } 
\begin{pmatrix}  
1 \\ -1 
\end{pmatrix}
\cdot
\begin{pmatrix}  
I(X_1;X'_1) & I(X_1; X'_2) \\ I(X_2; X'_1) & I(X_2;X'_2)
\end{pmatrix}
\begin{pmatrix}  
1 \\ -1 
\end{pmatrix} \ge 0 \ ?
\end{equation*}
Since we identified examples for which the inequality in Question~\ref{e.question.2} is violated, it follows that the answer to Question~\ref{e.question.matrix} is also negative in general. We do not know whether the answer to Question~\ref{e.question.matrix} is positive for Gaussian channels. Roughly speaking, the next proposition states that the answer to Question~\ref{e.question.matrix} is positive for Gaussian channels in the low signal-to-noise regime.
\begin{proposition}[psd kernel for Gaussian channels]
\label{p.psd}
Let $n \ge 1$ be an integer. For every $i \in \{1,\ldots, n\}$, let $f_i : \mcl S \to \R^{d_i}$, and let $(W_i, W_i')_{1 \le i \le n}$ be independent standard Gaussian random vectors, independent of the signal $S$, with $W_i$ and $W_i'$ of dimension $d_i$.  For every $i \in \{1,\ldots, n\}$ and $t \ge 0$, we define
\begin{equation*}  
X_i(t) := \sqrt{t} f_i(S) + W_i \quad \text{ and } \quad X'_i(t) := \sqrt{t} f_i(S) + W_i'.
\end{equation*}
For every $i,j \in \{1,\ldots, n\}$, we have
\begin{equation}
\label{e.psd.lim}
\lim_{t \to 0} \, t^{-2}\,  I(X_i(t); X'_j(t)) = \big|\E \big[ (f_i(S) - \E[f_i(S)]) (f_j(S) - \E[f_j(S)])^*\big] \big|^2,
\end{equation}
where the superscript $^*$ denotes the transpose operator, and the norm $|\cdot|$ over matrices is the Frobenius norm. 
Moreover, the matrix 
\begin{equation}
\label{e.psd}
\Ll(\big|\E \big[ (f_i(S) - \E[f_i(S)]) (f_j(S) - \E[f_j(S)])^*\big] \big|^2\Rr)_{1 \le i,j \le n}
\end{equation}
 is positive semidefinite.
\end{proposition}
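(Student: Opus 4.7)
The plan is to prove the two assertions separately. The asymptotic identity \eqref{e.psd.lim} is obtained by a Taylor-expansion calculation at small signal-to-noise ratio, while the positive semidefiniteness of the matrix in \eqref{e.psd} comes essentially for free once one rewrites its entries as an $L^2$ inner product.

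For \eqref{e.psd.lim}, I would expand the chi-squared divergence $\chi^2_t := \chi^2(P_{X_i(t), X_j'(t)} \,\Vert\, P_{X_i(t)} \otimes P_{X_j'(t)})$ to leading order in $t$ and then invoke the standard relation $D(P \Vert Q) = \tfrac 1 2 \chi^2(P \Vert Q) + o(\chi^2(P \Vert Q))$, which applies here since all relevant measures converge to products of standard Gaussians as $t \to 0$. Using the identity
\begin{equation*}
\phi_d(x - \sqrt{t}\,a) = \phi_d(x)\exp\Ll(\sqrt{t}\,x \cdot a - \tfrac t2 |a|^2\Rr)
\end{equation*}
for the standard $d$-dimensional Gaussian density $\phi_d$, the Radon--Nikodym derivative of the joint law with respect to the product of marginals factors as
\begin{equation*}
r_t(x, x') := \frac{p_{X_i(t), X_j'(t)}(x, x')}{p_{X_i(t)}(x)\, p_{X_j'(t)}(x')} = \frac{\E_S\bigl[e^{A(x, S) + B(x', S)}\bigr]}{\E_S\bigl[e^{A(x, S)}\bigr]\, \E_S\bigl[e^{B(x', S)}\bigr]},
\end{equation*}
with $A(x, s) := \sqrt{t}\,x \cdot f_i(s) - \tfrac t2 |f_i(s)|^2$ and $B(x', s)$ defined analogously using $f_j$ in place of $f_i$. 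A short cumulant-type expansion shows that $\log r_t$ equals, to leading order, the $P_S$-covariance of $A(x, \cdot)$ and $B(x', \cdot)$, which equals $t\,x^* C_{ij} x' + o(t)$ with $C_{ij} := \E[(f_i(S) - \E f_i(S))(f_j(S) - \E f_j(S))^*]$. Integrating $(r_t - 1)^2$ against the (nearly-Gaussian) marginals and using the identity $\int (x^* M x')^2\, \phi_{d_i}(x) \phi_{d_j}(x')\,\d x\,\d x' = |M|^2$ then produces a $t^2$-scaling of $\chi^2_t$ with coefficient $|C_{ij}|^2$, from which \eqref{e.psd.lim} follows. An alternative route is to apply \Cref{l.i.identity} to the pair of channels $P_i, P_j$ and Taylor expand the three mutual informations $I(S; X_i(t))$, $I(S; X_j'(t))$ and $I(S; (X_i(t), X_j'(t)))$ via the I-MMSE relation; the linear-in-$t$ contributions cancel and the $t^2$ coefficients recombine to the same Frobenius-norm expression. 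The only technical point requiring care is uniform control of the remainders in $(x, x')$, which is routine since $\mcl S$ is finite and hence $|f_i(S)|$ is bounded.

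For the positive semidefiniteness, the key algebraic observation is that the squared Frobenius norm of $C_{ij}$ is itself an $L^2$ inner product. Let $\td S$ be an independent copy of $S$, set $\td f_k(s) := f_k(s) - \E[f_k(S)]$, and define the scalar-valued function $\varphi_k(s, \td s) := \la \td f_k(s), \td f_k(\td s)\ra$. Expanding the Frobenius norm in coordinates and using the independence of $S$ and $\td S$,
\begin{equation*}
|C_{ij}|^2 = \sum_{a, b} \E\bigl[\td f_i(S)_a\, \td f_j(S)_b\bigr]\, \E\bigl[\td f_i(\td S)_a\, \td f_j(\td S)_b\bigr] = \E\bigl[\varphi_i(S, \td S)\, \varphi_j(S, \td S)\bigr].
\end{equation*}
The matrix in \eqref{e.psd} is therefore the Gram matrix of the family $(\varphi_k)_{1 \le k \le n}$ in $L^2(P_S \otimes P_{\td S})$, and is thus positive semidefinite.
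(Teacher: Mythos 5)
Your second part is the paper's argument almost verbatim: rewriting $|\E[\bar f_i(S)\bar f_j(S)^*]|^2=\E[\varphi_i(S,\td S)\,\varphi_j(S,\td S)]$ with an independent copy $\td S$ and $\varphi_k(s,\td s)=\bar f_k(s)\cdot\bar f_k(\td s)$ is exactly how the paper shows that the quadratic form $\sum_{i,j}\al_i\al_j\,|\E[\bar f_i(S)\bar f_j(S)^*]|^2=\E[(\sum_i\al_i\,\bar f_i(S)\cdot\bar f_i(\td S))^2]$ is nonnegative, i.e.\ that \eqref{e.psd} is a Gram matrix. For the limit \eqref{e.psd.lim}, your primary route is genuinely different: you expand the likelihood ratio and the chi-squared divergence by hand, whereas the paper does no direct computation and instead combines \Cref{l.i.identity} with the I-MMSE identity and the quoted second-derivative (Hessian) formulas \eqref{e.second.order}--\eqref{e.second.order.2}, so that the first-order terms cancel and only the cross term $|\E[\bar f_i(S)\bar f_j(S)^*]|^2$ survives; this is precisely the ``alternative route'' you sketch in one sentence. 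What your route buys is self-containedness (no appeal to the matrix I-MMSE literature); what it costs is that the step you call routine is where all the analysis sits: the expansion of $\log r_t$ is not uniform in $(x,x')$, so passing from the pointwise expansion to $\chi^2_t\sim t^2|C_{ij}|^2$, and then to $D=\tfrac12\chi^2+o(\chi^2)$, requires a domination or truncation argument in the Gaussian tails. Finiteness of $\mcl S$ does make this workable, but it should be spelled out rather than asserted.

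One substantive point of detail: carried out as you describe, your computation yields $\lim_{t\to0}t^{-2}I(X_i(t);X'_j(t))=\tfrac12\,|C_{ij}|^2$, not $|C_{ij}|^2$, since $\chi^2_t\sim t^2|C_{ij}|^2$ and $D\approx\tfrac12\chi^2$. A direct check in the scalar case $f_i(S)=f_j(S)=S$ with $S$ uniform on $\{-1,1\}$ confirms the factor: there $I(X_i(t);X'_j(t))=\tfrac{t^2}{2}+o(t^2)$ while the right-hand side of \eqref{e.psd.lim} equals $1$. The same factor $\tfrac12$ comes out of the Taylor expansion performed in the paper's own proof, so the discrepancy is with the constant displayed in \eqref{e.psd.lim} rather than with your expansion; it is immaterial for the positive semidefiniteness conclusion, but you should state the constant you actually obtain instead of asserting that \eqref{e.psd.lim} ``follows'' as written.
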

\begin{proof}
The proof is again based on the fundamental identities derived in \cite{guo2005mutual, lamarca2009linear, payaro2009hessian}. In order to ligthen the notation, we define, for every $i \in \{1,\ldots, n\}$ and $s \in \mcl S$,
\begin{equation*}  
\bar f_i(s) := f_i(s) - \E \Ll[ f_i(S) \Rr] .
\end{equation*}
Recalling that we assume the state space $\mcl S$ of the signal $S$ to be finite, one can check that the mapping $t \mapsto I(S;X_a(t))$ is infinitely differentiable. The I-MMSE relation from \cite{guo2005mutual} yields that 
\begin{equation}
\label{e.first.order}
\dr_t I(S;X_i(t))_{\mid t = 0} = \frac 1 2 \E \Ll[ \Ll| \bar f_i(S) \Rr| ^2 \Rr] ,
\end{equation}
while \cite[Theorem~5]{payaro2009hessian} or the proof of \cite[Theorem~3]{lamarca2009linear} imply that 
\begin{equation}
\label{e.second.order}
\dr_t^2 I(S;X_i(t))_{\mid t = 0}  = \frac 1 2 \Ll| \E \Ll[ \bar f_i(S) \bar f_i(S)^* \Rr] \Rr|^2.
\end{equation}
Since the choice of $f_i$ is arbitrary, the identities \eqref{e.first.order} and \eqref{e.second.order} also imply that
\begin{equation}  
\label{e.first.order.2}
\dr_t I(S;(X_i(t),X_j(t)))_{\mid t = 0} = \frac 1 2 \E \Ll[ \Ll| \bar f_i(S) \Rr| ^2 \Rr]  + \frac 1 2 \E \Ll[ \Ll| \bar f_j(S) \Rr| ^2 \Rr] ,
\end{equation}
and
\begin{multline}  
\label{e.second.order.2}
\dr_t^2 I(S;(X_i(t),X_j(t)))_{\mid t = 0}  
\\
= \frac 1 2 \Ll| \E \Ll[ \bar f_i(S) \bar f_i(S)^* \Rr] \Rr|^2 + \frac 1 2 \Ll| \E \Ll[ \bar f_j(S) \bar f_j(S)^* \Rr] \Rr|^2  +  \Ll| \E \Ll[ \bar f_i(S) \bar f_j(S)^* \Rr] \Rr|^2 .
\end{multline}
By Lemma~\ref{l.i.identity}, we have that 
\begin{equation*}  
I(X_i(t);X'_j(t)) = I(S;(X_i(t);X'_j(t))) - I(S;X_i(t)) - I(S;X_j(t)).
\end{equation*}
A Taylor expansion near $t = 0$ of this identity, combined with the expressions of the derivatives obtained above, therefore yields \eqref{e.psd.lim}. To see that the matrix in \eqref{e.psd} is positive semidefinite, let us denote by $\td S$ an independent copy of the random variable $S$. Writing $\cdot$ for the entrywise scalar product between vectors or matrices, we have for every $\al_1, \ldots, \al_n \in \R$ that
\begin{align*}  
\sum_{i,j = 1}^n \al_i \al_j \Ll| \E \Ll[ \bar f_i(S) \bar f_j(S)^* \Rr]  \Rr|^2  
& = \sum_{i,j = 1}^n \al_i \al_j \E \Ll[ \big(\bar f_i(S) \bar f_j(S)^*\big) \cdot \big(\bar f_i(\td S) \bar f_j(\td S)^* \big)\Rr]  
\\
& = \sum_{i,j = 1}^n \al_i \al_j \E \Ll[ \big(\bar f_i(S) \cdot \bar f_i(\td S)\big) \big(\bar f_j(S) \cdot \bar f_j(\td S)\big)\Rr]  
\\
& = \E \Ll[  \Ll( \sum_{i = 1}^n \al_i \bar f_i(S) \cdot \bar f_i(\td S) \Rr) ^2 \Rr] \ge 0.
\end{align*}
This completes the proof of Proposition~\ref{p.psd}.
\end{proof}

%
%
%
%

\section{Consequences for community detection}
\label{s.community}

Our initial motivation for exploring questions such as \ref{e.question} comes from the study of the mutual information of a problem of community detection in the stochastic block model. In the notation of \cite{sparse_prob}, we specialize to the choice of parameters $p = \frac 1 2$, $t = 0$, $\mu = t_1 \delta_1 + t_2 \delta_{-1}$, with $t_1, t_2 \ge 0$, so that the mutual information considered there simplifies and matches the assumptions of Proposition~\ref{p.taylor.expansion}, as we explain now. First, we sample $S$ as a Bernoulli random variable with parameter $1/2$ (this is one coordinate of $\sigma^*$ in the notation of \cite{sparse_prob}, except that we reparametrize this random variable taking values $\{-1,1\}$ into $S$ taking values in $\{0,1\}$ for notational consistency). Conditionally on $S$, we let $(X_1^{(\ell)}, X_2^{(\ell)})_{\ell \ge 1}$ be independent random variables, with $X_1^{(\ell)}$ sampled according to $P_1$ and $X_2^{(\ell)}$ sampled according to $P_2$, where the channels $P_1$ and $P_2$ are defined by 
\begin{equation}  
\label{e.channel.defs}
P_1(\cdot \mid s) = \Ber(p_s/N)\quad 
\text{ and }
\quad
P_2(\cdot \mid s) = \Ber(q_s/N)\quad \quad (s \in \{0,1\}),
\end{equation}
and $p_0, p_1, q_0, q_1 \in [0,\infty)$ are such that $p_0 = q_1$ and $p_1 = q_0$ (in the notation of \cite{sparse_prob}, we have $p_1 = q_0 = c + \Delta$, and $p_0 = q_1 = c-\Delta$, with the identification that $\sigma^* = 1$ and $-1$ correspond to $S = 1$ and $0$ respectively). While we will not always say it explicitly, we always understand that $N$ is taken sufficiently large that the quantities $p_s/N$ and $q_s/N$ appearing in \eqref{e.channel.defs} belong to the interval $[0,1]$. Finally, we let $\Pi^{(1)}_{N t_1}$ and $\Pi^{(2)}_{N t_2}$ be two independent Poisson random variables of parameters $N t_1$ and $N t_2$ respectively, independent of the all other random variables. With all these choices, and using the Poisson coloring theorem (see for instance \cite[Chapter~5]{Kingman}) we get that the mutual information studied in \cite{sparse_prob} simplifies into
\begin{equation*}  
\mcl I_N(t_1,t_2) := I\Ll(S;\Ll((X_1^{(\ell)})_{\ell \le \Pi^{(1)}_{N t_1}}, (X_2^{(\ell)})_{\ell \le \Pi^{(2)}_{N t_2}}\Rr)\Rr).
\end{equation*}
Although this is not apparent in the notation, we emphasize that the laws of $X_1^{(\ell)}$ and $X_2^{(\ell)}$ depend on $N$. As shown in \cite[Lemma~3.1]{sparse_prob}, the function $\mcl I_N$ converges pointwise to a limit, which we denote by $\mcl I_\infty$. 
\begin{proposition}[Breakdown of concavity of mutual information]
\label{p.non-concave}
For every $N \in \N \cup \{\infty\}$, the entries of the Hessian of the mapping $(t_1,t_2) \mapsto \mcl I_N(t_1,t_2)$ are nonpositive. However, in the regime of finite $N$ going to infinity, we have
\begin{equation}  
\label{e.non-concave.1}
\Ll(\dr_{t_1}^2 \mcl I_N + \dr_{t_2}^2 \mcl I_N - 2 \dr_{t_1} \dr_{t_2} \mcl I_N\Rr)(0,0) \ge \frac{(p_0 - p_1)^6}{6(p_0 + p_1)^4} + o(1) 
\end{equation}
as well as
\begin{equation}  
\label{e.non-concave.2}
\Ll(\dr_{t_1}^2 \mcl I_\infty + \dr_{t_2}^2 \mcl I_\infty - 2 \dr_{t_1} \dr_{t_2} \mcl I_\infty\Rr)(0,0) \ge \frac{(p_0 - p_1)^6}{6(p_0 + p_1)^4}.
\end{equation}
In particular, for every sufficiently large $N \in \N \cup \{\infty\}$, the mapping $(t_1,t_2) \mapsto \mcl I_N(t_1,t_2)$ is not concave.
\end{proposition}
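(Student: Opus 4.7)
The plan is to connect the Hessian of $\mcl I_N$ at the origin to the mutual informations compared in Proposition~\ref{p.taylor.expansion}, and then invoke that proposition with $\varepsilon = 1/N$. I would first set $I_{k_1,k_2} := I(S;(Y^{(1)}_{k_1},Y^{(2)}_{k_2}))$, where $Y^{(i)}_k$ denotes $k$ observations through $P_i(\cdot\mid S)$ that are conditionally independent given $S$. Conditioning on the Poisson counts $K_1 \sim \Poi(Nt_1)$ and $K_2 \sim \Poi(Nt_2)$, the observations in the definition of $\mcl I_N$ are exchangeable, so $\mcl I_N(t_1,t_2) = \E[I_{K_1,K_2}]$. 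Routine differentiation of the Poisson weights then yields
\begin{equation*}
\partial_{t_1}^2 \mcl I_N(t_1,t_2) = N^2\, \E\bigl[I_{K_1+2,K_2}-2I_{K_1+1,K_2}+I_{K_1,K_2}\bigr],
\end{equation*}
\begin{equation*}
\partial_{t_1}\partial_{t_2}\mcl I_N(t_1,t_2) = N^2\, \E\bigl[I_{K_1+1,K_2+1}-I_{K_1+1,K_2}-I_{K_1,K_2+1}+I_{K_1,K_2}\bigr],
\end{equation*}
and analogously for $\partial_{t_2}^2$.

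For the nonpositivity of each Hessian entry at every $(t_1,t_2)$, I would identify the discrete second differences above as negatives of conditional mutual informations. Setting $W := (Y^{(1)}_{k_1},Y^{(2)}_{k_2})$, the chain rule for mutual information gives
$I_{k_1+2,k_2} - 2I_{k_1+1,k_2} + I_{k_1,k_2} = I(S;(X_1,X'_1)\mid W) - 2I(S;X_1\mid W)$, where $X_1,X'_1$ are fresh $P_1$-observations conditionally independent given $S$. By Lemma~\ref{l.i.identity} applied to the posterior law of $S$ given $W$, this equals $-I(X_1;X'_1 \mid W) \le 0$. The mixed second difference is treated identically, and equals $-I(X_1;X_2 \mid W) \le 0$. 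Averaging these nonpositive quantities against the Poisson weights proves that every entry of the Hessian of $\mcl I_N$ is nonpositive, for every finite $N$ and every $(t_1,t_2)$.

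Evaluating the derivative formulas at $(t_1,t_2)=(0,0)$ collapses each Poisson average to its $(K_1,K_2)=(0,0)$ value, so
\begin{equation*}
\bigl(\partial_{t_1}^2 + \partial_{t_2}^2 - 2\partial_{t_1}\partial_{t_2}\bigr)\mcl I_N(0,0) = N^2 \bigl(I_{2,0} + I_{0,2} - 2 I_{1,1}\bigr) = N^2 \bigl(2I(X_1;X_2) - I(X_1;X'_1) - I(X_2;X'_2)\bigr),
\end{equation*}
where the last equality is the reformulation between \ref{e.question} and \ref{e.question.2} recorded just after Lemma~\ref{l.i.identity}. The channels \eqref{e.channel.defs} place this exactly in the setting of Proposition~\ref{p.taylor.expansion} with $\varepsilon = 1/N$, which produces \eqref{e.non-concave.1}. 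Combined with the nonpositivity of the Hessian entries, this prevents the Hessian at the origin from being negative semidefinite for $N$ large, so $\mcl I_N$ fails to be concave.

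The step I expect to be most delicate is the transfer of these conclusions to $N=\infty$: one must argue that $\mcl I_\infty$ admits second-order partial derivatives at $(0,0)$ and that these are the limits of the corresponding second derivatives of $\mcl I_N$. Granted this, \eqref{e.non-concave.2} follows from \eqref{e.non-concave.1} by passing to the limit, while the Hessian entries of $\mcl I_\infty$ remain nonpositive by preservation of concavity (in each variable separately) and of supermodular-free-ness (in the mixed direction) under pointwise limits, using the convergence $\mcl I_N \to \mcl I_\infty$ from \cite[Lemma~3.1]{sparse_prob}. A natural route is to obtain uniform-in-$N$ control of the Poisson-averaged second differences near the origin via Poisson tail bounds, in order to justify exchanging the $N\to\infty$ limit with differentiation in $(t_1,t_2)$.
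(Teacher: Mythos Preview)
Your treatment of the finite-$N$ claims---the Poisson differentiation identities, the identification of the discrete second differences as $-I(X_i;X'_j\mid W)$, and the evaluation at the origin leading to \eqref{e.non-concave.1} via Proposition~\ref{p.taylor.expansion}---is essentially the same as the paper's Steps~1--3. Your handling of the nonpositivity of the Hessian entries of $\mcl I_\infty$ by preservation under pointwise limits also matches what the paper does (``this property can be understood in a weak sense'').

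The real divergence, and the genuine gap, is in \eqref{e.non-concave.2}. You propose to obtain it by passing to the limit in \eqref{e.non-concave.1}, which requires that the second-order partial derivatives of $\mcl I_N$ at $(0,0)$ converge to those of $\mcl I_\infty$. You acknowledge this is delicate and suggest ``uniform-in-$N$ control of the Poisson-averaged second differences near the origin via Poisson tail bounds'', but this is not carried out, and it is not clear it can be made to work cheaply: the Poisson parameters are $Nt_1, Nt_2$, so for any fixed neighborhood of the origin the counts are unbounded as $N\to\infty$, and one would need to control $N^2$ times a conditional mutual information uniformly over an unbounded family of conditioning $\sigma$-algebras. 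Pointwise convergence $\mcl I_N\to\mcl I_\infty$ alone does not imply convergence of second derivatives.

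The paper sidesteps this entirely: instead of justifying any interchange of limits, it pulls the explicit formula for $\mcl I_\infty$ from \cite[Lemma~3.1]{sparse_prob}, strips the linear part, and computes $(\partial_{t_1}^2+\partial_{t_2}^2-2\partial_{t_1}\partial_{t_2})$ of the remaining Poisson-mixture expression $\phi(t_1,t_2)$ directly at the origin. This calculation reproduces exactly the right-hand side of \eqref{e.mutual_info_lbound} (without the $o(1)$), and the lower bound $\tfrac{(p_0-p_1)^6}{6(p_0+p_1)^4}$ then follows from the same $g(t)\ge t^3/6$ estimate used in the proof of Proposition~\ref{p.taylor.expansion}. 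So the paper's route for $N=\infty$ is a separate, self-contained computation rather than a limiting argument; your proposal would need a substantive analytic lemma to close the gap you flag.
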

\begin{proof}
We decompose the proof into four steps. 

\smallskip

\noindent \emph{Step 1.} In this step, we derive convenient representations for the second derivatives of~$\mcl I_N$, for finite $N$. For every $t \ge 0$ and $L \in \Z_+$, we denote
\begin{equation*}  
\pi(t,L) := e^{-t} \frac{t^L}{L!}.
\end{equation*}
With the understanding that $\pi(t,-1) = 0$, we have the identity
\begin{equation}  
\label{e.poisson.identity}
\dr_t \pi(t,L) = \pi(t,L-1) - \pi(t,L).
\end{equation}
In order to lighten the calculations, we also introduce the shorthand notation
\begin{equation*}  
I_N(L_1,L_2) = I\Ll(S;\Ll((X_1^{(\ell)})_{\ell \le L_1}, (X_2^{(\ell)})_{\ell \le L_2}\Rr) \Rr).
\end{equation*}
We start by observing that
\begin{equation*}  
\mcl I_N(t_1, t_2) = \sum_{L_1, L_2  = 0}^{+\infty} \pi(Nt_1, L_1) \pi(Nt_2, L_2) I_N(L_1,L_2).
\end{equation*}
The identity \eqref{e.poisson.identity} yields that
\begin{equation*}  
\dr_{t_1} \mcl I_N(t_1,t_2) = N \sum_{L_1,L_2 = 0}^{+\infty} \pi(Nt_1, L_1) \pi(Nt_2, L_2) \Ll(I_N(L_1+1, L_2) - I_N(L_1,L_2) \Rr),
\end{equation*}
and thus
\begin{multline}  
\label{e.expr.second.der}
\dr_{t_1}^2 \mcl I_N(t_1,t_2) = N^2 \sum_{L_1,L_2 = 0}^{+\infty} \pi(Nt_1, L_1) \pi(Nt_2, L_2) \\ \Ll(I_N(L_1+2, L_2) - 2I_N(L_1+1,L_2) + I_N(L_1,L_2)\Rr).
\end{multline}
A similar expression can be obtained for $\dr_{t_2}^2 \mcl I_N$, with the finite-difference operation acting on the variable $L_2$ in place of $L_1$. The cross-derivative takes the form
\begin{multline}  
\label{e.expr.cross.der}
\dr_{t_1} \dr_{t_2} \mcl I_N(t_1,t_2) = N^2 \sum_{L_1,L_2 = 0}^{+\infty} \pi(Nt_1, L_1) \pi(Nt_2, L_2) \\ \Ll(I_N(L_1+1, L_2+1) - I_N(L_1+1,L_2) - I_N(L_1,L_2 +1) + I_N(L_1,L_2)\Rr).
\end{multline}

\smallskip

\noindent \emph{Step 2.} In this step, we show that the entries of the Hessian of $\mcl I_N$ are nonpositive. Since this property can be understood in a weak sense, or in terms of the signs of certain finite differences, it suffices to show its validity for finite $N$. From the expressions of the second derivatives obtained in the previous step, we see that it suffices to show that, for every $N \in \N, L_1, L_2 \in \Z_+$,
\begin{equation}
\label{e.nonpos.1}
I_N(L_1+2, L_2) - 2I_N(L_1+1,L_2) + I_N(L_1,L_2) \le 0, 
\end{equation}
\begin{equation}
\label{e.nonpos.2}
I_N(L_1, L_2+2) - 2I_N(L_1,L_2+1) + I_N(L_1,L_2) \le 0,
\end{equation}
and 
\begin{equation}
\label{e.nonpos.3}
I_N(L_1+1, L_2+1) - I_N(L_1+1,L_2) - I_N(L_1,L_2 +1) + I_N(L_1,L_2) \le 0.
\end{equation}
We only show the validity of \eqref{e.nonpos.3}, the arguments for \eqref{e.nonpos.1} and \eqref{e.nonpos.2} being similar. In order to lighten the notation, we write 
\begin{equation*}  
Z := \Ll( (X_1^{(\ell)})_{\ell \le L_1}, (X_2^{(\ell)})_{\ell \le L_2} \Rr) .
\end{equation*}
By the chain rule for mutual information, we have
\begin{align*}  
I_N(L_1+1, L_2+1) 
& = I\Ll(S;\Ll((X_1^{(\ell)})_{\ell \le L_1+1}, (X_2^{(\ell)})_{\ell \le L_2+1}\Rr) \Rr)
\\
& = I\Ll(S;\Ll(X_1^{(L_1+1)}, X_2^{(L_2+1)}\Rr) \mid Z \Rr) + I(S;Z),
\end{align*}
and similarly, 
\begin{equation*}  
I_N(L_1+1,L_2) = I\Ll(S;X_1^{(L_1+1)} \mid Z \Rr) + I(S;Z) ,
\end{equation*}
and
\begin{equation*}  
I_N(L_1,L_2+1) = I\Ll(S;X_2^{(L_2+1)} \mid Z \Rr) + I(S;Z).
\end{equation*}
Showing \eqref{e.nonpos.3} is thus equivalent to showing that
\begin{equation}
\label{e.nonpos.3bis}
I\Ll(S;\Ll(X_1^{(L_1+1)}, X_2^{(L_2+1)}\Rr) \mid Z \Rr) - I\Ll(S;X_1^{(L_1+1)} \mid Z \Rr)  -  I\Ll(S;X_2^{(L_2+1)} \mid Z \Rr)  \le 0.
\end{equation}
We use again the chain rule of mutual information to write
\begin{align*}  
I\Ll(S;\Ll(X_1^{(L_1+1)}, X_2^{(L_2+1)}\Rr) \mid Z \Rr) 
& = I\Ll(S;X_1^{(L_1+1)} \mid Z \Rr) + I\Ll(S;X_2^{(L_2+1)} \mid X_1^{(L_1+1)}, Z \Rr).
\end{align*}
The last term of the identity above can be rewritten as
\begin{multline}  
\label{e.chainagain}
 I\Ll(X_2^{(L_2+1)} ; \Ll(S,X_1^{(L_1+1)}\Rr) \mid  Z \Rr) - I\Ll(X_2^{(L_2+1)};X_1^{(L_1+1)} \mid  Z \Rr) 
\\
 \quad = I\Ll(S;X_2^{(L_2+1)}  \mid  Z \Rr) + I\Ll(X_1^{(L_1+1)}; X_2^{(L_2+1)} \mid S, Z\Rr)- I\Ll(X_1^{(L_1+1)};X_2^{(L_2+1)} \mid  Z \Rr) .
\end{multline}
Conditionally on $S$, the random variables $(X_1^{(L_1+1)}, X_2^{(L_2+1)}, Z)$ are independent, and thus the second term on the right side of \eqref{e.chainagain} is zero. Combining these identities, we obtain that the left side of \eqref{e.nonpos.3bis} equals $-I\Ll(X_1^{(L_1+1)};X_2^{(L_2+1)} \mid  Z \Rr) $, which is indeed nonpositive.

\smallskip

\noindent \emph{Step 3.} In this step, we show the validity of \eqref{e.non-concave.1}, and thus deduce the non-concavity of $\mcl I_N$ for every $N$ sufficiently large and finite. Using the expressions for the second derivative obtained in \eqref{e.expr.second.der} and \eqref{e.expr.cross.der}, we can write
\begin{align*}  
& \begin{pmatrix}  
1 \\ -1 
\end{pmatrix}
\cdot
\begin{pmatrix}  
\dr^2_{t_1} \mcl I_N(0,0) & \dr_{t_1} \dr_{t_2} \mcl I_N(0,0) \\ \dr_{t_1} \dr_{t_2} \mcl I_N(0,0) & \dr_{t_2}^2 \mcl I_N(0,0)
\end{pmatrix}
\begin{pmatrix}  
1 \\ -1 
\end{pmatrix} 
& 
\\
& \qquad = N^2\Ll[I(S;(X_1^{(1)},X_1^{(2)})) + I(S;(X_2^{(1)},X_2^{(2)})) - 2 I(S;(X_1^{(1)},X_2^{(1)}))\Rr]
\\
 & \qquad = N^2 \Ll[2 I(X_1^{(1)};X_2^{(1)}) - I(X_1^{(1)};X_1^{(2)}) - I(X_2^{(1)};X_2^{(2)}) \Rr],
\end{align*}
where we used Lemma~\ref{l.i.identity} in the last step. Proposition~\ref{p.taylor.expansion} ensures that, for finite $N$ going to infinity, we have
\begin{equation*}  
N^2 \Ll[2 I(X_1^{(1)};X_2^{(1)}) - I(X_1^{(1)};X_1^{(2)}) - I(X_2^{(1)};X_2^{(2)}) \Rr] \ge \frac{(p_0 - p_1)^6}{6(p_0 + p_1)^4} + o(1),
\end{equation*}
which gives the desired result.

\smallskip 

\noindent \emph{Step 4.} In this last step, we show the validity of \eqref{e.non-concave.2}. 
Instead of trying to justify that the second derivatives of $\mcl I_N$ converge to those of $\mcl I_\infty$, we simply borrow from \cite{sparse_prob} an explicit expression for $\mcl I_\infty$, and observe that it satisfies \eqref{e.non-concave.2} by calculating its derivatives.  We recall that $p_1 = q_0$ corresponds to $c+\Delta$ in the notation of \cite{sparse_prob}, while $p_0 = q_1$ corresponds to $c-\Delta$ in the notation of \cite{sparse_prob}. The statement of \cite[Lemma~3.1]{sparse_prob} involves two Poisson point processes, denoted by $\Pi_+$ and $\Pi_-$ there, and which in our present context can be represented as $\Pi^{(1)}_{p_1 t_1} \delta_1 + \Pi^{(2)}_{p_0 t_2} \delta_{-1}$ and $\Pi^{(1)}_{p_0 t_1} \delta_1 + \Pi^{(2)}_{p_1 t_2} \delta_{-1}$ respectively. The quantity $\mu[-1,1]\E x_1$ appearing in \cite[Lemma~3.1]{sparse_prob} translates into $t_1-t_2$ in our context. The function that is denoted by $\psi(\mu)$ in the notation of \cite[Lemma~3.1]{sparse_prob} becomes, in our current setting, the function
\begin{align*}  
\psi(t_1,t_2) & :=   -(t_1+t_2)\frac{p_1 + p_0}{2} 
\\
& \qquad 
+ \frac 1 2 \E \log\Ll[ \frac 1 2 e^{-\frac{(t_1-t_2)(p_1-p_0)}{2}} p_1^{\Pi^{(1)}_{p_1 t_1}}p_0^{\Pi^{(2)}_{p_0 t_2}} + \frac 1 2 e^{\frac{(t_1-t_2)(p_1-p_0)}{2}}  p_0^{\Pi^{(1)}_{p_1 t_1}}p_1^{\Pi^{(2)}_{p_0 t_2}}
 \Rr]
\\
& \qquad 
 + \frac 1 2 \E \log \Ll[\frac 1 2 e^{-\frac{(t_1-t_2)(p_1-p_0)}{2}} p_1^{\Pi^{(1)}_{p_0 t_1}}p_0^{\Pi^{(2)}_{p_1t_2}} + \frac 1 2 e^{\frac{(t_1-t_2)(p_1-p_0)}{2}} p_0^{\Pi^{(1)}_{p_0 t_1}}p_1^{\Pi^{(2)}_{p_1t_2}} \Rr]  .
\end{align*}
Arguing as for \cite[(1.16)-(1.17)]{sparse_prob}, one can check that the mutual information $\mcl I_N(t_1,t_2)$ is obtained as a simple (and convergent as $N \to \infty$) linear term in $(t_1,t_2)$, minus a function, denoted by $\psi_N(\mu)$ in the notation of \cite[Lemma~3.1]{sparse_prob}, that converges to $\psi(t_1,t_2)$. In order to show that the mapping $(t_1,t_2) \mapsto \mcl I_\infty(t_1,t_2)$ is not concave, it thus suffices to show that the mapping
\begin{align*}  
(t_1,t_2) \mapsto \phi(t_1,t_2) 
  := & \frac 1 2 \E \log\Ll[ \frac 1 2 e^{-\frac{(t_1-t_2)(p_1-p_0)}{2}} p_1^{\Pi^{(1)}_{p_1 t_1}}p_0^{\Pi^{(2)}_{p_0 t_2}} + \frac 1 2 e^{\frac{(t_1-t_2)(p_1-p_0)}{2}}  p_0^{\Pi^{(1)}_{p_1 t_1}}p_1^{\Pi^{(2)}_{p_0 t_2}}
 \Rr]
\\
 + & \frac 1 2 \E \log \Ll[\frac 1 2 e^{-\frac{(t_1-t_2)(p_1-p_0)}{2}} p_1^{\Pi^{(1)}_{p_0 t_1}}p_0^{\Pi^{(2)}_{p_1t_2}} + \frac 1 2 e^{\frac{(t_1-t_2)(p_1-p_0)}{2}} p_0^{\Pi^{(1)}_{p_0 t_1}}p_1^{\Pi^{(2)}_{p_1t_2}} \Rr]  
\end{align*}
is not convex. For every $s \in \R$ and integers $L_1, L_2 \ge 0$, we denote
\begin{equation*}  
J(s,L_1, L_2) := \log\Ll[ \frac 1 2 e^{-\frac{s(p_1-p_0)}{2}} p_1^{L_1}p_0^{L_2} + \frac 1 2 e^{\frac{s(p_1-p_0)}{2}}  p_0^{L_1}p_1^{L_2} \Rr],
\end{equation*}
and observe that 
\begin{align*}  
\phi(t_1,t_2) = \frac 1 2 \sum_{L_1, L_2 = 0}^{+\infty} \Ll(\pi(p_1 t_1, L_1) \pi(p_0 t_2, L_2)  + \pi(p_0 t_1, L_1) \pi(p_1 t_2, L_2)  \Rr) J(t_1-t_2,L_1, L_2). 
\end{align*}
Using the  identity \eqref{e.poisson.identity}, we write 
\begin{align*}
    \dr_{t_1} \phi (t_1, t_2) & = \frac 1 2 \sum_{L_1, L_2 = 0}^{+\infty}J(t_1 - t_2, L_1, L_2) \Big [  \Ll ( \pi(p_1 t_1, L_1 - 1) - \pi(p_1 t_1, L_1)  \Rr) p_1 \pi(p_0 t_2, L_2)  \\
    &\qquad\qquad +  \Ll ( \pi(p_0 t_1, L_1 - 1) - \pi(p_0 t_1, L_1)  \Rr) p_0 \pi(p_1 t_2, L_2) \Big]   \\
    &\qquad + \dr_{t_1} J(t_1 - t_2, L_1, L_2) \Big [ \pi(p_1 t_1, L_1) \pi(p_0 t_2, L_2) + \pi(p_0 t_1, L_1) \pi(p_1 t_2, L_2) \Big ] ,
\end{align*}
and 
\begin{align*}
    \dr^2_{t_1} \phi (t_1, t_2) & = \frac 1 2 \sum_{L_1, L_2 = 0}^{+\infty} J(t_1 - t_2, L_1, L_2) \\
    & \qquad \Big  [ \Ll ( \pi(p_1 t_1, L_1 - 2) - 2\pi(p_1 t_1, L_1 - 1) + \pi(p_1 t_1, L_1) \Rr) p_1^2 \pi(p_0 t_2, L_2)  \\
    &\qquad\qquad +  \Ll ( \pi(p_0 t_1, L_1 - 2) - 2\pi(p_0 t_1, L_1 - 1) + \pi(p_0 t_1, L_1) \Rr) p_0^2 \pi(p_1 t_2, L_2) \Big]   \\
    &\qquad +\dr_{t_1} J(t_1 - t_2, L_1, L_2) \Big [  \Ll ( \pi(p_1 t_1, L_1 - 1) - \pi(p_1 t_1, L_1)  \Rr) p_1 \pi(p_0 t_2, L_2)  \\
    &\qquad\qquad +  \Ll ( \pi(p_0 t_1, L_1 - 1) - \pi(p_0 t_1, L_1)  \Rr) p_0 \pi(p_1 t_2, L_2) \Big] \\
    &\qquad + \dr^2_{t_1} J(t_1 - t_2, L_1, L_2) \Big [ \pi(p_1 t_1, L_1) \pi(p_0 t_2, L_2) + \pi(p_0 t_1, L_1) \pi(p_1 t_2, L_2) \Big ] .
\end{align*}
Similar calculations yield 
\begin{align*}
    \dr_{t_1} \dr_{t_2} \phi(t_1, t_2) & = \frac 1 2 \sum_{L_1, L_2 = 0}^{+\infty}  J(t_1 - t_2, L_1, L_2) \\
    & \qquad p_0 p_1  \Big  [ \Ll ( \pi(p_1 t_1, L_1 - 1) - \pi(p_1 t_1, L_1) \Rr) \Ll(  \pi(p_0 t_2, L_2 - 1) - \pi(p_0 t_2, L_2) \Rr ) \\
    &\qquad\qquad + \Ll ( \pi(p_0 t_1, L_1 - 1) - \pi(p_0 t_1, L_1) \Rr) \Ll(  \pi(p_1 t_2, L_2 - 1) - \pi(p_1 t_2, L_2) \Rr ) \Big ] \\
    &\qquad + \dr_{t_1} J(t_1 - t_2, L_1, L_2) \Big [p_0 \pi(p_1 t_1, L_1) \Ll(  \pi(p_0 t_2, L_2 - 1) - \pi(p_0 t_2, L_2) \Rr )  \\
    &\qquad\qquad + p_1 \pi(p_0 t_1, L_1) \Ll(  \pi(p_1 t_2, L_2 - 1) - \pi(p_1 t_2, L_2) \Rr ) \Big] \\
    &\qquad + \dr_{t_2} J(t_1 - t_2, L_1, L_2) \Big [ p_1 \Ll ( \pi(p_1 t_1, L_1 - 1) - \pi(p_1 t_1, L_1) \Rr) \pi(p_0 t_2, L_2) \\
    &\qquad\qquad + p_0 \Ll ( \pi(p_0 t_1, L_1 - 1) - \pi(p_0 t_1, L_1) \Rr) \pi(p_1 t_2, L_2) \Big] \\
    &\qquad + \dr_{t_1} \dr_{t_2} J(t_1 - t_2, L_1, L_2) \Big [ \pi(p_1 t_1, L_1) \pi(p_0 t_2, L_2) + \pi(p_0 t_1, L_1) \pi(p_1 t_2, L_2)  \Big].
\end{align*}
We get a similar expression for $\dr_{t_2}^2 \phi (t_1, t_2) $ as for $\dr_{t_1}^2 \phi (t_1, t_2) $:
\begin{align*}
    \dr^2_{t_2} \phi (t_1, t_2) & = \frac 1 2 \sum_{L_1, L_2 = 0}^{+\infty} J(t_1 - t_2, L_1, L_2) \\
    & \qquad \Big  [ p_0^2 \pi(p_0 t_1, L_1) \Ll ( \pi(p_0 t_2, L_2 - 2) - 2\pi(p_0 t_2, L_2 - 1) + \pi(p_0 t_2, L_2) \Rr)   \\
    &\qquad\qquad +   p_1^2 \pi(p_0 t_1, L_1) \Ll (\pi(p_1 t_2, L_2 - 2) - 2\pi(p_1 t_2, L_2 - 1) + \pi(p_1 t_2, L_2) \Rr)  \Big]   \\
    &\qquad +\dr_{t_2} J(t_1 - t_2, L_1, L_2)  \Big [ p_0 \pi(p_1 t_1, L_1)  \Ll ( \pi(p_0 t_2, L_2 - 1) - \pi(p_0 t_2, L_2)  \Rr)  \\
    &\qquad\qquad +  p_1 \pi(p_0 t_1, L_1) \Ll ( \pi(p_1 t_2, L_2 - 1) - \pi(p_1 t_2, L_2)  \Rr)  \Big] \\
    &\qquad + \dr^2_{t_2} J(t_1 - t_2, L_1, L_2) \Big [ \pi(p_1 t_1, L_1) \pi(p_0 t_2, L_2) + \pi(p_0 t_1, L_1) \pi(p_1 t_2, L_2) \Big ] .
\end{align*}

We are interested in the value of $2 \dr_{t_1} \dr_{t_2} \phi(t_1,t_2) - \dr_{t_1}^2 \phi(t_1,t_2) - \dr_{t_2}^2 \phi(t_1,t_2)$ at $(t_1, t_2) = (0,0)$. Hence, we use the Taylor expansion of $J(t_1 - t_2, L_1, L_2) $ for arbitrary $L_1, L_2 \in \Z_+$ at $t_1 = t_2 = 0$ to get expressions for derivatives up to the second order
\begin{multline}  \label{e.taylor_J}
J(t_1 - t_2, N_1, N_2) = \log \Ll( \frac{p_1^{N_1} p_0^{N_2} + p_0^{N_1} p_1^{N_2}}{2} \Rr) - (t_1 - t_2)\frac{p_1 - p_0}{2} \frac{p_1^{N_1} p_0^{N_2} - p_0^{N_1} p_1^{N_2}}{p_1^{N_1} p_0^{N_2} + p_0^{N_1} p_1^{N_2}} 
\\
+ \frac 1 2 (t_1 - t_2)^2 {(p_1 - p_0)^2} \frac{p_1^{N_1+N_2} p_0^{N_1 + N_2}}{(p_1^{N_1} p_0^{N_2} + p_0^{N_1} p_1^{N_2})^2} + O((t_1-t_2)^3).
\end{multline}
Further, note that $\pi(0, L) = 0$ for all $L \ge 1$ and $\pi(0,0)=1$. With this observation, we can write the second derivatives of $\phi$ as sum of a few terms.
\begin{equation}\label{e.phi_derivatives_diff}
    \begin{split}
    &2 \dr_{t_1} \dr_{t_2} \phi(0,0) - \dr_{t_1}^2 \phi(0,0) - \dr_{t_2}^2 \phi(0,0) \\
    &\qquad = \frac 1 2 \Big ( - (p_0-p_1)^2 J(0,0,0) + 2( p_0-p_1)^2 J(0, 1, 0) + 2(p_0-p_1)^2 J(0,0,1)\\
    &\qquad\qquad + 4 p_0 p_1  J(0, 1, 1) - (p_0^2 + p_1^2) J(0, 2, 0) - (p_0^2 + p_1^2) J(0, 0, 2)\Big) \\
    &\qquad\qquad - (p_0 + p_1) \dr_{t_1}J(0, 1, 0) + (p_0 + p_1) \dr_{t_1}J(0, 0, 1) \\
    &\qquad\qquad + (p_0 + p_1) \dr_{t_2}J(0, 1, 0) - (p_0 + p_1) \dr_{t_2}J(0, 0, 1) \\
    &\qquad\qquad +  2 \dr_{t_1} \dr_{t_2} J(0,0,0) - \dr_{t_1}^2 J(0,0,0) - \dr_{t_2}^2 J(0,0,0) .
    \end{split}
\end{equation}
From \eqref{e.taylor_J} we get 
\begin{equation*}
    2 \dr_{t_1} \dr_{t_2} J(0,0,0) - \dr_{t_1}^2 J(0,0,0) - \dr_{t_2}^2 J(0,0,0) = - (p_0 - p_1)^2.
\end{equation*}
Using that $\dr_{t_1} J (0,1,0) = - \dr_{t_1} J (0,0,1) = -\nicefrac{(p_0 - p_1)^2}{2(p_0 + p_1)} $, we obtain that
\begin{equation*}
    - (p_0 + p_1) \dr_{t_1}J(0, 1, 0) + (p_0 + p_1) \dr_{t_1}J(0, 0, 1) = (p_0 - p_1)^2.
\end{equation*}
Similarly, $(p_0 + p_1) \dr_{t_2}J(0, 1, 0) - (p_0 + p_1) \dr_{t_2}J(0, 0, 1) = (p_0  - p_1)^2$. It remains to compute the first terms of \eqref{e.phi_derivatives_diff} that do not contain derivatives of $J$. 
\begin{equation}\label{e.phi_derivatives_diff_first_term}
\begin{split}
    &- (p_0-p_1)^2 J(0,0,0) + 2( p_0-p_1)^2 J(0, 1, 0) + 2(p_0-p_1)^2 J(0,0,1)\\
    &\qquad\qquad + 4 p_0 p_1  J(0, 1, 1) - (p_0^2 + p_1^2) J(0, 2, 0) - (p_0^2 + p_1^2)J(0,0,2)\\ 
    &\qquad= 4 (p_0 - p_1)^2 \log\Ll( \frac{p_0 + p_1}{2}  \Rr) + 4 p_0 p_1 \log (p_0 p_1) - 2 (p_0+p_1)^2 \log \Ll ( \frac{p_0^2 + p_1^2 }{2} \Rr).
\end{split}
\end{equation}
Combining all together and rearranging terms in \eqref{e.phi_derivatives_diff_first_term}, we get 
\begin{multline*}
    2 \dr_{t_1} \dr_{t_2} \phi(0,0) - \dr_{t_1}^2 \phi(0,0) - \dr_{t_2}^2 \phi(0,0) = \\(p_0 - p_1)^2 + 2 p_0 p_1 \log \Ll (1 - \frac{(p_0 - p_1)^2}{ (p_0 + p_1)^2} \Rr ) -  ( p_0^2 + p_1^2) \log \Ll (1 + \frac{(p_0 - p_1)^2}{ (p_0 + p_1)^2} \Rr ).
\end{multline*}
The right-hand side of the above equality coincides with the right-hand side of \eqref{e.mutual_info_lbound}, with the $o(1)$ term taken out. As was shown in the proof of \Cref{p.taylor.expansion}, this expression is lower-bounded by $\nicefrac{(p_0 - p_1)^6}{6 (p_0 + p_1)^4}$.  
\end{proof}
\bibliographystyle{abbrv}
\bibliography{mutual_information}

\end{document}